\newtheorem{property}{Property}
\newtheorem{lemma}{Lemma}
\title{\LARGE \bf
Minimum Clustering of Matrices Based on Phase Alignment
}
\author{
Honghao Wu,
Kemi Ding and 
Li Qiu, \textit{Fellow,~IEEE}
\thanks{*This work was supported in part by the Natural Science Foundation of Guangdong Province No.2024A1515011630}
\thanks{Honghao Wu, Kemi Ding are with the School of Automation and Intelligent Manufacturing (AIM), Southern University of Science and Technology, Shenzhen 518055, China. Email: {\tt\small 12332671 @mail.sustech.edu.cn; dingkm@sustech.edu.cn.}}
\thanks{Li Qiu is with the School of Science and Engineering, The Chinese University of Hong Kong, Shenzhen, Shenzhen 518172, China. Email: 
{\tt\small qiuli@cuhk.edu.cn}.}
}
\begin{document}

\maketitle
\thispagestyle{empty}
\pagestyle{empty}

\begin{abstract}
Coordinating multi-agent systems requires balancing synchronization performance and controller implementation costs. To this end, we classify agents by their intrinsic properties, enabling each group to be controlled by a uniform controller and thus reducing the number of unique controller types required.
Existing centralized control methods, despite their capability to achieve high synchronization performance with fewer types of controllers, suffer from critical drawbacks such as limited scalability and vulnerability to single points of failure. On the other hand, distributed control strategies, where controllers are typically agent-dependent, result in the type of required controllers increasing proportionally with the size of the system. 

This paper introduces a novel phase-alignment-based framework to minimize the type of controllers by strategically clustering agents with aligned synchronization behaviors. Leveraging the intrinsic phase properties of complex matrices, we formulate a constrained clustering problem and propose a hierarchical optimization method combining recursive exact searches for small-scale systems and scalable stochastic approximations for large-scale networks. This work bridges theoretical phase analysis with practical control synthesis, offering a cost-effective solution for large-scale multi-agent systems. The theoretical results applied for the analysis of a 50-agent network illustrate the effectiveness of the proposed algorithms.
\end{abstract}

\section{INTRODUCTION}
The coordination of multi-agent systems has become a pivotal area of research due to its extensive applications in fields such as robotics, autonomous vehicles, and distributed sensor networks, see textbooks and research monographs \cite{b23,b24,b25,b26}. In these systems, multiple agents interact within a shared environment, necessitating the development of control strategies that enable coherent and efficient collective behavior. Extending beyond traditional engineering applications, the principles of multi-agent coordination have been applied to social networks, where individuals or entities are modeled as agents within a complex system. In such contexts, understanding and influencing the behavior of these agents is crucial for tasks like information dissemination, opinion dynamics, and collective decision-making.

Historically, centralized control approaches were employed to manage multi-agent systems \cite{b27}. In this paradigm, a single controller possesses global knowledge of the system and dictates the actions of all agents. While this method ensures coordinated behavior, it suffers from significant drawbacks, including scalability issues and vulnerability to single points of failure. Studies have shown that centralized control systems, despite their effectiveness in enforcing coordinated strategies, struggle with increasing system complexity, leading to computational inefficiencies and bottlenecks \cite{b1}. Additionally, designing such centralized controllers becomes increasingly challenging as the number of agents grows due to the combinatorial explosion of possible system states. Research also indicates that centralized architectures are highly susceptible to single-point failures, making them less robust in dynamic environments \cite{b2}. For example, centralized models used in multi-agent deep reinforcement learning have been found to experience scalability constraints due to computational and communication overhead \cite{b3}. Furthermore, energy-efficient multi-agent systems find decentralized or hybrid control methods to be more scalable and resilient compared to purely centralized architectures \cite{b4}.

To address these limitations, distributed control strategies have been developed \cite{b24,b25}, wherein each agent independently determines its actions based on local information and limited communication with neighboring agents. This approach enhances the system’s robustness and scalability, as it eliminates reliance on a central authority and allows for parallel processing. Studies have shown that distributed control enables asymptotic consensus in multi-agent systems, improving overall stability and coordination \cite{b5}. However, a notable disadvantage of distributed control is the requirement for a large number of different types of agent-dependent controllers, which can lead to increased complexity in coordination and potential communication overhead \cite{b5}.

Despite significant advances in the synchronization and coordination of multi-agent systems, one critical yet less-explored aspect is the number of controller types employed within these systems. Existing distributed or decentralized methods typically assume that each agent or subgroup of agents requires an individual controller, thus increasing both implementation complexity and cost, especially for large-scale networks. Recent studies have shown that reducing the number of controller types without sacrificing synchronization performance is a key challenge \cite{b6}. Although containment control methods partially mitigate this challenge by reducing the number of controller types through leader-follower schemes \cite{b7}, determining the minimal number of controller types necessary for synchronization while maintaining desired performance remains an open and challenging problem \cite{b8}.

Motivated by recent advances in phase theory \cite{b9}, which provide a fresh perspective in addressing synchronization problems by capturing the inherent phase relationships among agents \cite{b10,b11,b12}, this paper proposes to exploit these intrinsic phase properties to minimize the required number of controller types. Specifically, by leveraging the phase characteristics inherent in agent clusters, we aim to strategically identify and group agents whose synchronization behaviors are naturally aligned, thus reducing redundant control inputs and simplifying the overall network control structure.

The proposed framework achieves an effective balance between synchronization performance and practical implementation cost. By considering both synchronization conditions derived from phase theory and constraints on the number of controller types, our approach offers a systematic methodology to achieve synchronization using the fewest possible controllers. This not only reduces complexity and cost but also improves system scalability and robustness in practical deployment.

The rest of this paper is organized as follows. In Section \uppercase\expandafter{\romannumeral 2}, preliminaries on graph theory and phase theory are introduced. In Section \uppercase\expandafter{\romannumeral 3}, we briefly review the multi-agent synchronization problem within the framework of phase theory and formulate a minimum clustering problem with phase constraint. We design dual algorithms in Section \uppercase\expandafter{\romannumeral 4}: one identifies exact global optima, and the other approximates them computationally. Simulation results are presented in Section \uppercase\expandafter{\romannumeral 5} and Section \uppercase\expandafter{\romannumeral 6} concludes this paper. 
\section{PRELIMINARIES}
\subsection{Phase Theory}
A complex number $v = ae^{j\theta}$ is fundamentally composed of two components: magnitude $a$ and phase $\theta$. For a complex matrix $A\in \mathbb{C}^{m\times m}$, its magnitude is conventionally characterized by singular values $\sigma(A)$, while the question of how to define the matrix phase has long remained open. Recently we have established a rigorous and well-structured definition for matrix phases \cite{b13,b14}.

The numerical range of a matrix $A$ is defined as
\begin{equation*}
    W(A)=\{x^*Ax:x\in\mathbb{C}^m \,\text{with}\, \|x\|=1\}.
\end{equation*}

The numerical range $W(A)$ of matrix $A$, depicted in the complex plane as shown in Fig. \ref{fig2}(a), forms a convex set. A matrix is called sectorial if its numerical range resides entirely within a half-plane without origin. For such matrices, the numerical range can be bounded by two tangent rays, as illustrated in Fig. \ref{fig2}(b). Let us denote
\begin{equation}
    \phi(A)=[\phi_1(A)\; \,\phi_2(A)\,\cdots \,\phi_m(A)],
\end{equation}
where $\phi_1(A)\geq\phi_2(A)\geq\dots\geq\phi_m(A)$, as the phases of sectorial matrix $A$.
The phase $\phi_i(A)$ lies within the closed interval $[\overline{\phi}(A),\underline{\phi}(A)]$, where $\overline{\phi}(A)$ and $\underline{\phi}(A)$ represent the minimum and maximum phases of the matrix, respectively, with all intermediate phases contained within this interval. It has been proven that for a sectorial matrix $A^{m\times m}$ \cite{b15}, there exists a unique sectorial decomposition:
\begin{equation}
    A=T^*DT,
\end{equation}
where $T$ is nonsingular and
\begin{equation}
    D=\text{diag}\{e^{i\phi_1(A)},\,e^{i\phi_2(A)},\dots,\,e^{i\phi_m(A)}\}.
\end{equation}
\begin{figure}[htbp]
\centerline{\includegraphics[width=0.43\textwidth]{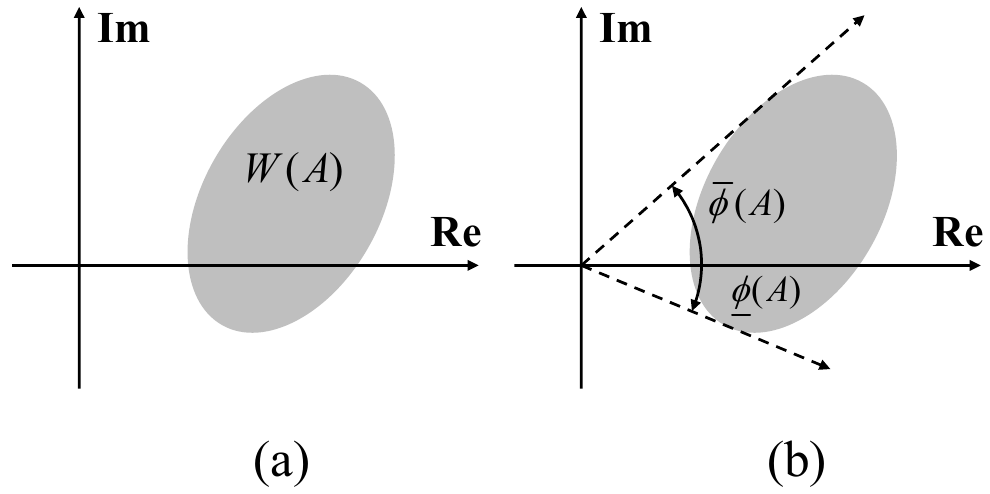}}
\caption{Numerical range of sectorial matrix.}
\label{fig2}
\end{figure}
Furthermore, when the numerical range $W(A)$ lies in the right half-plane and contains the origin (indicating that $A$ is singular), we encounter two special cases.
When the origin is a sharp point of the numerical range as shown in Fig. \ref{fign2}(a), we say matrix $A$ is quasi-sectorial. When the numerical range is tangent to the origin as depicted in Fig. \ref{fign2}(b), the matrix is termed semi-sectorial. For these two special cases, $|\phi(A)|=\text{rank}(A)$. The detailed computational approach can be found in \cite{b14}.
\begin{figure}[htbp]
\centerline{\includegraphics[width=0.43\textwidth]{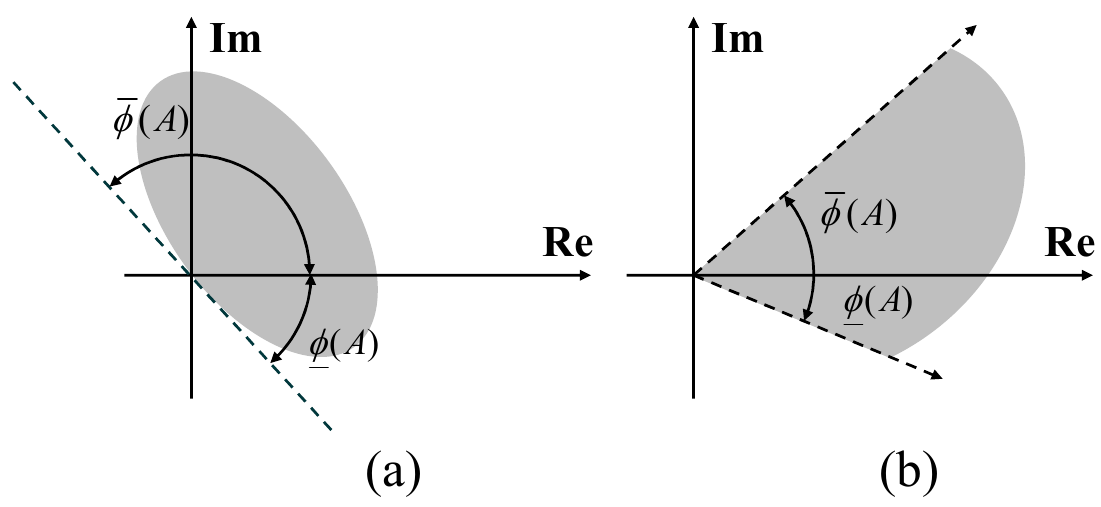}}
\caption{Origin is on the boundary of the numerical range.}
\label{fign2}
\end{figure}

However, most common situations, such as transfer matrices in control systems, feature matrices in machine learning, and social networks, are unlikely non-sectorial, nor are they even square matrices. For these cases, we can obtain their phase characteristics using the following methods.

Given an angle $\alpha\in(-\pi/2,\,\pi/2)$, for a complex matrix $A\in\mathbb{C}^{m\times n},\,m\leq n$, if there exists a matrix $K\in\mathbb{C}^{m\times n}$ satisfying
\begin{equation}
    \phi(AK) \in [-\alpha,\alpha]^r,
    \label{eq: single_alignable}
\end{equation}
with $\text{rank}(AK)=\text{rank}(A)=r$, we define $A$ as an $\alpha$-alignable matrix. 
If $A$ is nonsingular, we can use its inverse $A^{-1}$ as $K$, thereby obtaining $AA^{-1}=I$. It is evident that the phases of $I$ are zero, thus making $A$ alignable for any arbitrary $\alpha$. 
If $A$ is a singular matrix with $\text{rank}(A)=r$, we can take its pseudoinverse as $K$, obtaining
\begin{equation*}
    AA^{\dagger} = U
    \begin{pmatrix}
    I_{r} & 0 \\
    0 & 0
    \end{pmatrix}U^*,
\end{equation*}
where $(\cdot)^{\dagger}$ denotes the Moore-Penrose generalized inverse and $U$ is unitary. The phases of this product are identically zero.

For a set of complex matrices of the same dimensions $\mathcal{C} = \left\{ A_i \in \mathbb{C}^{m\times n} : i = 1, 2, \dots, k, \, m \leq n \right\}$, if there exists a matrix $K^{n\times m}$ such that
\begin{equation}
    \phi(A_iK)\in[-\alpha,\alpha]^{r_i},\quad i=1,\dots,k,
    \label{eq: multi_alignable}
\end{equation}
with $\text{rank}(A_iK)=\text{rank}(A_i)=r_i$. We refer to these matrices as simultaneously $\alpha$-alignable,  as illustrated in Fig. \ref{fig3}.
By leveraging the geometric properties of the numerical range in the complex plane, the uniform control matrix $K$ can be obtained by solving a group of LMIs \cite{b28}:
\begin{equation}
    \begin{aligned}
    \text{Re}(A_iK) &\geq KK^H,\\
    \text{tan}(\alpha)\text{Re}(A_iK) \geq \text{Im}(A_iK)
     &\geq -\text{tan}(\alpha)\text{Re}(A_iK),
    \end{aligned}
    \label{eq: LMIs}
\end{equation}
where $i=1,\dots,k$, $\text{Re}\{\cdot\}$ and $\text{Im}\{\cdot\}$ denote the real and imaginary parts of a complex matrix respectively.
\begin{figure}[htbp]
\centerline{\includegraphics[width=0.4\textwidth,page=2]{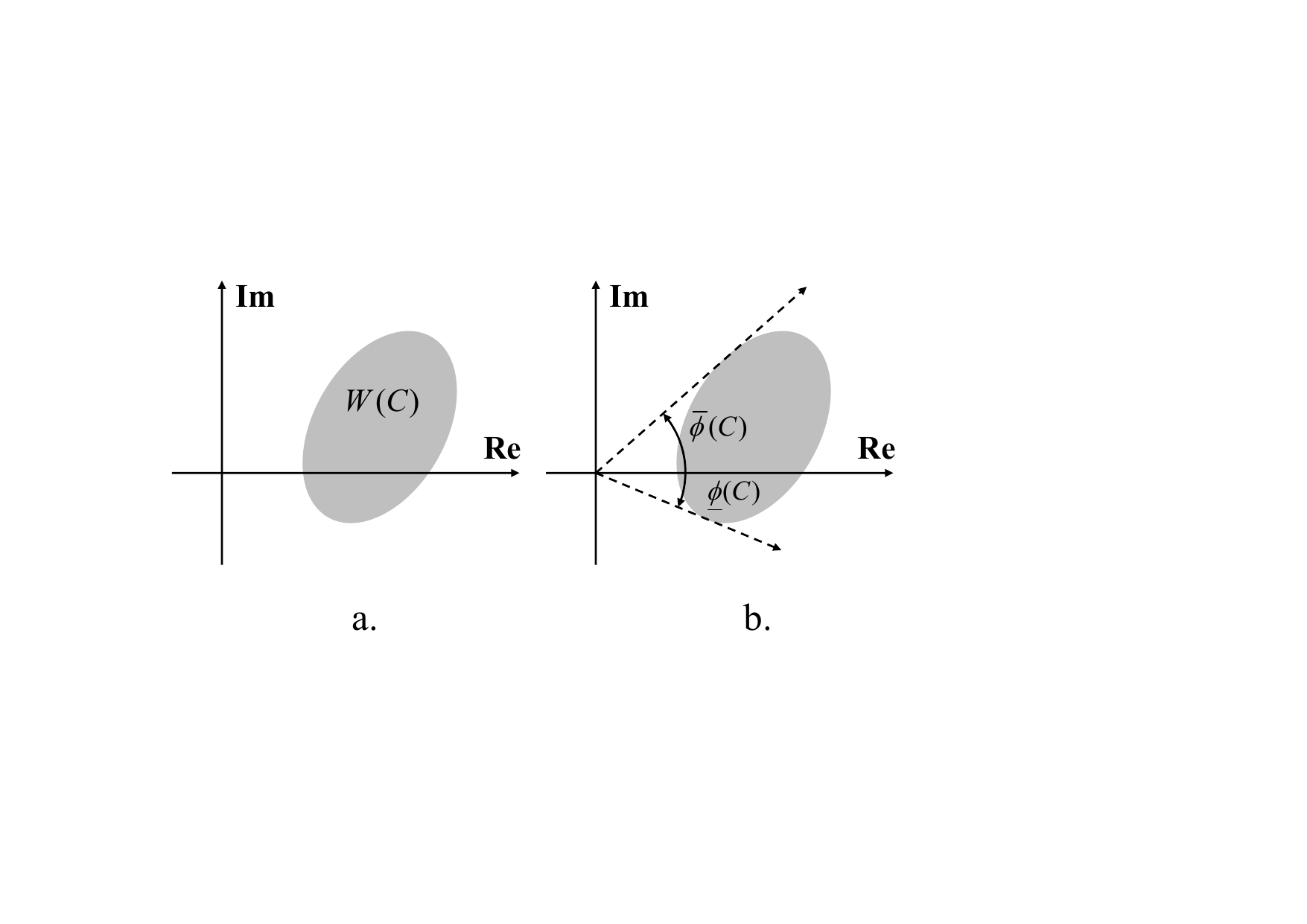}}
\caption{Numerical ranges of matrices before (left) and after (right) alignment.}
\label{fig3}
\end{figure}

\begin{property}[Downward closedness]
\label{downwardclosed}
    If matrix set $\mathcal{C}$ is $\alpha$-alignable, then any $\mathcal{C}' \subseteq \mathcal{C}$ is also $\alpha$-alignable.
\end{property}

This property ensures that larger or equal subsets being alignable implies all their smaller subsets remain alignable as well. Moreover, it also serves as the foundation for proving Lemma \ref{lem:swap-alpha-align}.

Furthermore, for a matrix set $\mathcal{C}$, let
\begin{equation*}
    \alpha(\mathcal{C})=\{\alpha\in[0,\pi/2):\,\mathcal{C}\;\text{is simultaneously $\alpha$-alignable} \}.
\end{equation*}
We had defined the \emph{diversity} of matrix set $\mathcal{C}$ in \cite{b28}:
\begin{equation}
    \mathrm{div}(\mathcal{C}) =
\begin{cases}
    \inf \alpha(\mathcal{C}), & \text{if } \,\alpha(\mathcal{C})\, \text{ is nonempty},\\
    \dfrac{\pi}{2}, & \text{otherwise}.
    \end{cases}
\end{equation}
\subsection{Graph Theory}
A \textit{directed graph} (digraph) is denoted as $\mathbb{G} = (\mathcal{V}, \mathcal{E})$, where $\mathcal{V} = \{1, ..., n\}$ represents the vertex set and $\mathcal{E} \subseteq \mathcal{V} \times \mathcal{V}$ denotes ordered edge pairs. For undirected graphs, edges $(i, j)$ and $(j, i)$ are considered identical.

A graph is \textit{connected} if there exists a path between every pair of vertices. In digraphs, we distinguish between \textit{strong connectivity} (bidirectional reachability for all vertex pairs) and \textit{weak connectivity} (connectivity in the underlying undirected graph). Any graph can be decomposed into \textit{connected components}—maximal subgraphs where every vertex pair is connected. For digraphs, these are either \textit{strongly connected components} (maximal subgraphs with mutual reachability) or \textit{weakly connected components} (components in the underlying undirected graph).

The graph structure can be encoded in an \textit{adjacency matrix} $W \in \mathbb{R}^{n \times n}$, where entries $w_{ij}$ indicate edge existence from vertex $i$ to $j$. For \textit{weighted graphs}, $w_{ij} \in \mathbb{R}^+$ represents connection strength. When weights encode similarity measures, we obtain a \textit{similarity matrix} $S$ satisfying $s_{ij} = s_{ji} \geq 0$ with zero diagonal elements.

The graph \textit{Laplacian matrix} plays a crucial role in spectral analysis:
\begin{equation}
    L = D - W,
\end{equation}
where $D$ is the degree matrix with diagonal entries $d_{ii} = \sum_{j=1}^n w_{ij}$. This matrix exhibits key properties including positive semi-definiteness, $L\mathbf{1} = \mathbf{0}$, and a zero eigenvalue multiplicity equal to the number of connected components.

The Laplacian matrix \( L \) plays a crucial role in phase theory applications. 
As shown in Theorem 9.2 of \cite{b14}, for the Laplacian matrix of a strongly connected graph, it has a positive left eigenvector $v$ related to the zero eigenvalue. The essential phase of $L$ can be obtained by:
\begin{equation}
\label{essp}
    \phi_{\text{ess}}(L) = \overline{\phi} (D^{-1} L D) = \overline{\phi} (V L),
\end{equation}
where $V=\text{diag}\{v\}$ and $D=V^{-1/2}$.
\section{PROBLEM FORMULATION}
Consider a situation in a network of $k$ agents, where the parameters of each agent can be characterized by a matrix of particular dimension $C_i\in \mathbb{C}^{m\times n}$, $i = 1,\dots,k$, $m\leq n$. We aim to design multiple controllers to regulate their behavior.
For agent $i$, if there exists a matrix $K$ such that $K$ and the feature matrix $C_i$ of the agent satisfy \eqref{eq: single_alignable}, then agent $i$ is said to be controllable via the controller $K$.

To control all agents, the ideal scenario would be, as in \eqref{eq: multi_alignable}, to design a uniform controller $K$ that can regulate all agents simultaneously \cite{b10}. However, in typical multi-agent systems, it is generally not possible for all agents to be simultaneously alignable.

The most straightforward solution is to design agent-dependent controllers for all agents. However, when the control system scales to a large number of agents, this approach can become highly resource-consuming.

Therefore, the corresponding problem arises: how to partition these matrices into $p$ clusters, such that all matrices within each cluster can be simultaneously aligned under the given angle $\alpha$ by the controller $K_j,\,j\in1,2,\dots,p$, as shown in Fig. \ref{fig4}. This implies that the diversity of all clusters is less than or equal to $\alpha$. When $p$ reaches its minimum value, the number of controller types required to regulate all agents is also minimized.
\begin{figure}[htbp]
\centerline{\includegraphics[width=0.5\textwidth,page = 1]{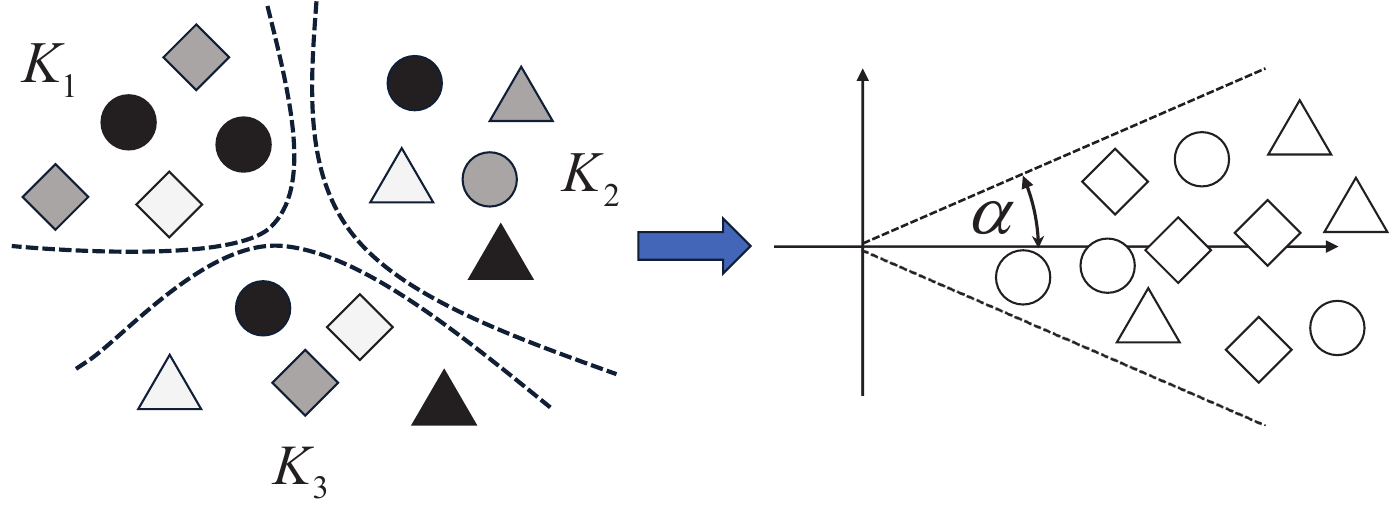}}
\caption{Matrices clustering via alignment properties, each shape represents an agent.}
\label{fig4}
\end{figure}

The mathematical formulation of the problem is presented as follows:

Given an angle $\alpha \in (-\pi/2,\pi/2)$ and a set of $k$ complex matrices $\mathcal{A} = \{A_1, A_2, \dots, A_k\}$, where $A_i \in \mathbb{C}^{m\times n}$ with $m \leq n$, we aim to partition $\mathcal{A}$ into $p$ disjoint clusters $\mathcal{C}=\{\mathcal{C}_1, \mathcal{C}_2, \dots, \mathcal{C}_{p}\}$, $\bigcup_{i=1}^{p} \mathcal{C}_i = \mathcal{A}$. The optimization problem is formulated as follows:
\begin{align*}
&\min_{\substack{p,\ \{\mathcal{C}_i\}}}
     \quad p  \label{eq:obj} \\[6pt]
&\text{s.t.}
    \quad\text{div}(\mathcal{C}_i)\leq \alpha,\quad i=1,\dots,p.
\end{align*}

The objective is to minimize the number of clusters $p$ while satisfying the above constraints.

Next, we demonstrate how to replace one set of clusters with another without increasing the cardinality of the cluster. This lemma will be employed in the following section to search for an optimal solution.

\begin{lemma}[Swapping Lemma]
\label{lem:swap-alpha-align}
Let $\mathcal{C}$ be a $\alpha$-alignable partition of matrix set $\mathcal{A}$ where $\mathcal{X}\subseteq\mathcal{C}$ is a partition of the subset $U\subseteq\mathcal{A}$ (note that $\bigcup_{\mathcal{X}_i\in\mathcal{X}}\mathcal{X}_i=U\subseteq\mathcal{A}$). Let $\mathcal{Y}$ be a set of disjoint clusters that cover $\bigcup_{\mathcal{Y}_i\in\mathcal{Y}}\mathcal{Y}_i=Q\subseteq\mathcal{A}$ such that $U\subseteq Q$ and $|\mathcal{Y}|\leq |\mathcal{X}|$.
Then
\begin{equation}  \mathcal{C}'=\bigl((\mathcal{C}\setminus\mathcal{X})\setminus (Q\setminus U)\bigr)\cup \mathcal{Y}
\end{equation}
is a new partition of $\mathcal{A}$ such that $|\mathcal{C}'|\leq|\mathcal{C}|$.
\end{lemma}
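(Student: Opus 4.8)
The plan is to verify directly that $\mathcal{C}'$ meets the three requirements of being a legitimate partition of $\mathcal{A}$ whose size does not exceed that of $\mathcal{C}$: (i) its clusters cover $\mathcal{A}$, (ii) its clusters are pairwise disjoint, and (iii) $|\mathcal{C}'|\leq|\mathcal{C}|$. The one point of care is the reading of the expression $(\mathcal{C}\setminus\mathcal{X})\setminus(Q\setminus U)$. I interpret it as the operation of deleting, from each cluster remaining in $\mathcal{C}\setminus\mathcal{X}$, every matrix lying in $Q\setminus U$. This is the natural reading: the matrices of $Q\setminus U$ are exactly those that $\mathcal{Y}$ will newly re-cover, so they must be stripped from their old clusters to avoid duplication.

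First I would establish coverage. Because $\mathcal{C}$ partitions $\mathcal{A}$ and the sub-collection $\mathcal{X}$ covers precisely $U$, the remaining clusters $\mathcal{C}\setminus\mathcal{X}$ cover exactly $\mathcal{A}\setminus U$. Deleting the matrices of $Q\setminus U$ from them leaves coverage of $(\mathcal{A}\setminus U)\setminus(Q\setminus U)=\mathcal{A}\setminus Q$, where the final equality uses the hypothesis $U\subseteq Q$. Adjoining $\mathcal{Y}$, which covers $Q$, restores full coverage $(\mathcal{A}\setminus Q)\cup Q=\mathcal{A}$. This identity $\mathcal{A}\setminus Q$, hinging on $U\subseteq Q$, is the crux that lets the two families tile $\mathcal{A}$ with no gap.

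Next I would check disjointness and cardinality together. The trimmed clusters inherited from $\mathcal{C}\setminus\mathcal{X}$ stay pairwise disjoint, since they were disjoint in $\mathcal{C}$ and deleting elements cannot create overlaps; the clusters of $\mathcal{Y}$ are pairwise disjoint by hypothesis; and the two families are mutually disjoint because the trimmed clusters sit inside $\mathcal{A}\setminus Q$ while $\mathcal{Y}$ sits inside $Q$. For the size bound, note that trimming can only empty clusters, never create new ones, so the number of nonempty trimmed clusters is at most $|\mathcal{C}\setminus\mathcal{X}|=|\mathcal{C}|-|\mathcal{X}|$. Hence $|\mathcal{C}'|\leq(|\mathcal{C}|-|\mathcal{X}|)+|\mathcal{Y}|\leq|\mathcal{C}|$, invoking the assumption $|\mathcal{Y}|\leq|\mathcal{X}|$ in the last step.

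Finally, to keep $\mathcal{C}'$ meaningful as an $\alpha$-alignable partition (the intended use in the optimization), I would appeal to Property \ref{downwardclosed}: each trimmed cluster is a subset of an $\alpha$-alignable cluster of $\mathcal{C}$ and therefore remains $\alpha$-alignable, while $\mathcal{Y}$ is taken $\alpha$-alignable by construction in the swap; thus every cluster of $\mathcal{C}'$ has diversity at most $\alpha$. This is precisely where downward closedness enters the argument. I do not expect a deep difficulty here; the main obstacle is the bookkeeping around the set-minus notation, namely keeping straight that $Q\setminus U$ is removed element-wise rather than as whole clusters, and correctly discarding trimmed clusters that become empty, which in fact only tightens the cardinality bound.
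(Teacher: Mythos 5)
Your proof is correct and follows essentially the same route as the paper's: define the residual family $\mathcal{C}\setminus\mathcal{X}$ covering $\mathcal{A}\setminus U$, strip the elements of $Q\setminus U$ element-wise (justified by Property \ref{downwardclosed}) so the trimmed clusters cover exactly $\mathcal{A}\setminus Q$, adjoin $\mathcal{Y}$ covering $Q$, and bound $|\mathcal{C}'|\leq(|\mathcal{C}|-|\mathcal{X}|)+|\mathcal{Y}|\leq|\mathcal{C}|$. Your explicit handling of clusters emptied by trimming and of pairwise disjointness is slightly more careful than the paper's wording, but it is the same argument.
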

\begin{proof}
Define \(\mathcal{R} = \mathcal{C} \setminus \mathcal{X}\), which partitions \(\mathcal{A} \setminus U\). Since \(U \subseteq Q\), the union \(\mathcal{R} \cup \mathcal{Y}\) overlaps on \(Q \setminus U\). By the downward closedness property of \(\alpha\)-alignable clusters (Property \ref{downwardclosed}), removing the overlapping part \(Q \setminus U\) from \(\mathcal{R}\) preserves the structural validity of the remaining partition. Define \(\mathcal{R}' = \mathcal{R} \setminus (Q \setminus U)\), which partitions \(\mathcal{A} \setminus Q\).  

The new partition \(\mathcal{C}' = \mathcal{R}' \cup \mathcal{Y}\) covers \(\mathcal{A}\) because \(\mathcal{R}'\) covers \(\mathcal{A} \setminus Q\) and \(\mathcal{Y}\) covers \(Q\), with their union being disjoint. To bound the size, observe that \(|\mathcal{Y}| \leq |\mathcal{X}|\) by assumption, and \(|\mathcal{R}'| \leq |\mathcal{R}|\) since \(\mathcal{R}'\) is a subset of \(\mathcal{R}\). Therefore,  
\[
|\mathcal{C}'| = |\mathcal{R}'| + |\mathcal{Y}| \leq |\mathcal{R}| + |\mathcal{X}| = |\mathcal{C}|,
\]  
which completes the proof.
\end{proof}
\section{GLOBAL OPTIMA AND APPROXIMATION}
Before presenting our algorithm, we briefly address the problem’s complexity: this task is NP-hard. It can be reduced to the classical Minimum Clique Cover (MCC) problem (also NP-hard), with stricter constraints—global simultaneous alignment of matrices within each cluster is required, rather than merely pairwise alignment—making it at least as hard as its parent problem \cite{b16}. However, for small-scale instances (e.g., up to hundreds of elements), a recursive algorithm can still compute the minimum clustering within reasonable time frames \cite{b17}.

In the Minimum Clique Cover (MCC) problem, various optimization techniques—such as vertex reduction and modular decomposition—can be employed to accelerate algorithm convergence \cite{b18}. However, in the context of our problem, most of these conventional methods are inapplicable due to the unique features of matrix phases. Specifically, simplifying parts of the structure (e.g., removing vertices or edges) could disrupt the global alignment of matrices, rendering such optimizations ineffective \cite{b19}.
\subsection{Exact Recursive Minimization}
To address this challenge, we first introduce a weighted adjacency matrix $\mathbb{G}(S)$ that captures the pairwise phases similarity between matrices: given the angle $\alpha\in(-\pi/2,\pi/2)$, for any two matrices $A_i$ and $A_j$, $s_{i,j}=\frac{\pi}{2}-\text{div}(A_i,A_j)$ if they are $\alpha$-alignable; otherwise, $s_{i,j}=0$.

Here, $\text{div}(A_i,A_j)$ quantifies the similarity between $A_i$ and $A_j$, and $\alpha$-alignability ensures that only matrices with sufficient alignment potential are connected in the graph. By constructing $\mathbb{G}(S)$, we can precompute the structural properties of the agent network, which significantly facilitates the subsequent problem-solving process.
\begin{algorithm}[!h]
\caption{Branch-and-Recurse (BnR)} 
\label{BnR}
\KwIn{
\begin{tabular}[t]{r@{ : }l}
\(\mathbb{G}\)& The uncovered graph built by matrix S,\\
\(\mathcal{C}\)& Currently growing set of clusters,\\
\(\mathcal{C^*}\)& The smallest clustering found so far
\end{tabular}}
\KwOut{Minimum clustering \( \mathcal{C}^* \)}
\SetAlgoLined 
\DontPrintSemicolon

\For{$\mathbb{G}_i\in \text{Components}(\mathbb{G})$}{
    $\mathcal{C}^* \gets \bigcup_{\mathbb{G}_i} \text{BnR}(\mathbb{G}_i, \emptyset, \{\{v\}\,|\,v\in V(\mathbb{G}_i)\})$\;
    \Return $\mathcal{C}^*$\;
}

\If{$V(\mathbb{G})=\emptyset$}{
    \Return $\min(|\mathcal{C}|, |\mathcal{C}^*|)$\;
}

$v \gets $a matrix in $V(\mathbb{G})$ of minimum degree\;
\For{each maximal cluster $C_i \supseteq \{v\}$}{
    $\mathcal{C}^* \gets \text{BnR}(G \setminus C_i, \mathcal{C} \cup C_i, \mathcal{C}^*) $\;
}
\Return $\mathcal{C}^*$\;
\end{algorithm}

By incorporating the concept of $diversity$, we derive the Branch-and-Recurse (BnR) algorithm (Algorithm \ref{BnR}) to compute the minimum clustering for small-scale instances. The algorithm operates under the constraint that non-alignable matrices cannot coexist within the same cluster. Consequently, the graph is first partitioned into connected components, each of which is recursively solved independently, followed by merging the results.

The core intuition of BnR lies in leveraging the similarity matrix $\mathbb{G}(S)$ to guide the search. Specifically, at each step, the algorithm selects the vertex $v$ with the minimal degree among uncovered vertices as the root. A low degree implies weaker relationships (i.e., fewer alignable pairs) between $v$ and other matrices, making it easier to identify clusters by first separating weakly connected nodes. 

Having selected an appropriate root point $v$, the algorithm enumerates all maximal clusters $C$ containing $v$ from uncovered matrices, treating them as branches. 
The maximal cluster is defined similarly to the maximal clique: a maximal cluster represents the largest subset of matrices that can simultaneously $\alpha$-align; adding any additional matrix would violate this alignment property.
For each branch, the selected maximal cluster $C_i$ is removed from the uncovered subgraph $\mathbb{G}$, added to the current partial solution $\mathcal{C}$, and the process iterates recursively until all vertices are covered. When the recursive search exhausts all branches, the algorithm returns the minimum clustering found, as illustrated in Figure 5.
This branching is inspired by Gramm et al. \cite{b20}, who proposed branching on all maximal cliques containing an edge in the edge clique cover problem.

The optimal convergence of the algorithm can be derived from the Swapping Lemma \ref{lem:swap-alpha-align}. Suppose there exists a minimum clustering $\mathcal{C}^*$, for the matrix set $\mathcal{A}$, while an optimal branch $\mathcal{C}^*_{alg}$ is obtained through algorithm \ref{BnR}. Since the algorithm always selects the maximum cluster at each current root node, we can iteratively replace clusters in $\mathcal{C}^*$ with those from $\mathcal{C}^*_{alg}$ starting from the initial root, while guaranteeing non-increasing cluster cardinality. Ultimately, we can completely replace $\mathcal{C}^*$ with $\mathcal{C}^*_{alg}$ through this process, thereby proving the equivalence of these two solutions.
\begin{figure}[htbp]
\centerline{\includegraphics[width=0.45\textwidth,page=1]{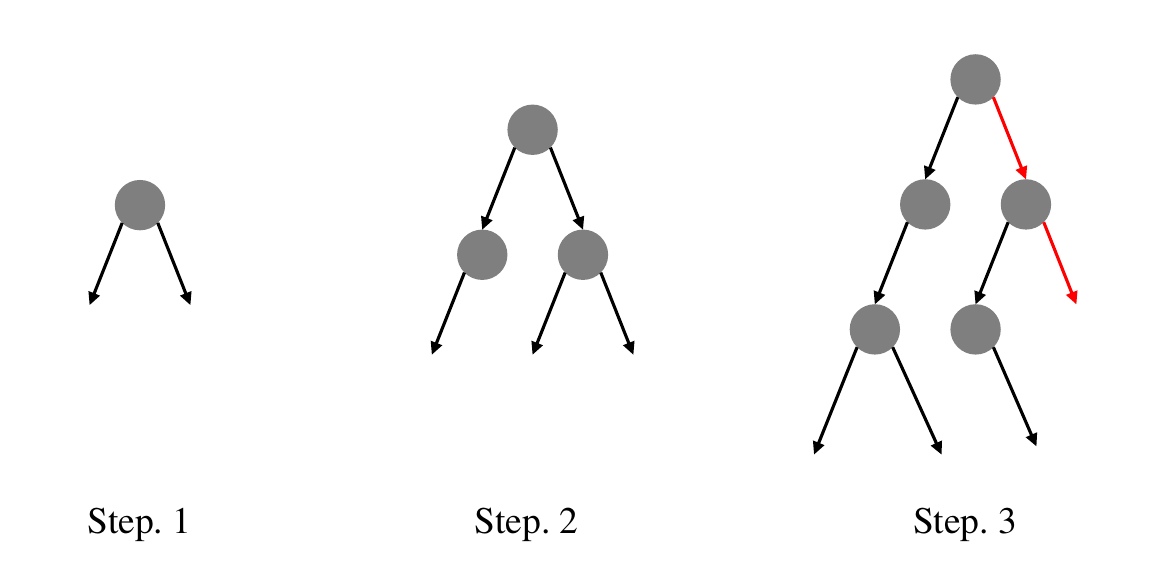}}
\caption{Steps of the BnR Algorithm: vertices represent root nodes, and arrows denote maximal clusters containing the corresponding root. The red path represents the optima at the end}
\label{fig5}
\end{figure}
\subsection{Randomized Approximation}
While the recursive algorithm guarantees exact optimal solutions, its exponential complexity ($O^*(c^n)$), where $c$ denotes the average branching factor per vertex) renders it impractical for large-scale instances. To address this, we integrate stochasticity into the search process, enabling efficient approximation of global optima within bounded computational budgets. By embedding a simulated annealing framework within the branch-and-prune architecture \cite{b22}, we develop a randomized variant (Algorithm \ref{HBnR}) that strategically balances computational cost and solution accuracy. This hybrid approach probabilistically prioritizes promising branches while discarding suboptimal paths, thereby accelerating convergence without sacrificing solution quality.
\begin{algorithm}[!h]
\caption{Heuristic Branch-and-Bound (HBnB)} 
\label{HBnR}
\KwIn{
\begin{tabular}[t]{r@{ : }l}
\(T,t\)& Global and branch temperatures,\\
\(\beta,\gamma\)& Cooling parameters of $T$ and $t$,\\
\(e\)& Termination temperature,\\
\(\mathbb{G}\)& The uncovered graph built by matrix S,\\
\(\mathcal{C}\)& Currently growing set of clusters,\\
\(\mathcal{C^*}\)& The smallest clustering found so far
\end{tabular}}
\KwOut{Minimum clustering \( \mathcal{C}^* \)}
\SetAlgoLined 
\DontPrintSemicolon
\For{$\mathbb{G}_i\in \text{Components}(\mathbb{G})$}{
    $\mathcal{C}^* \gets \bigcup_{\mathbb{G}_i} \text{HBnB}(\mathbb{G}_i, \emptyset, \{\{v\}\,|\,v\in V(\mathbb{G}_i)\})$\;
    \Return $\mathcal{C}^*$\;
}
\While{$T\geq e$}{
\If{$V(\mathbb{G})=\emptyset$}{
    $\mathcal{C}^*\gets\min(|\mathcal{C}|, |\mathcal{C}^*|)$\;
    $T\gets \beta T$\;
    $t\gets T$\;
    \Return $\text{Backtrack}(\mathcal{C},T)$\;
}
\If{$|\mathcal{C}|+\text{LowerBound}(\mathbb{G})\geq|\mathcal{C}^*|$}{
    $T\gets \beta T$\;
    $t\gets T$\;
    \Return $\text{Backtrack}(\mathcal{C},T)$\quad //\,Cut branch\;
}
$v \gets $a matrix in $V(\mathbb{G})$ of minimum degree\;
\For{each maximal cluster $C_i \supseteq \{v\}$}{
    $p_i\gets\text{Potential}(\mathbb{G},\mathcal{C},C_i)$
}
$C^*\gets\text{ChooseBranch}(p,t)$\;
$\mathbb{G}\gets\mathbb{G}\setminus C^*$\;
$\mathcal{C}\gets\mathcal{C}\cup C^*$\;
$t\gets\gamma t$
}
\end{algorithm}

\subsection*{Function Descriptions}
\begin{itemize}
    \item \text{Backtrack}($\mathcal{C}, \mathbb{G}, T$): 
    The algorithm randomly backtracks to a root node in the current partial solution path $\mathcal{C}$. Based on the position of that root node, it resets $\mathcal{C}$ and $\mathbb{G}$, and then continues the loop.
    The backtracking distance is regulated by the global temperature $T$. 
    A higher $T$ allows backtracking to more distant root nodes, while a lower $T$ restricts backtracking to recent root nodes.

    \item \text{LowerBound}($\mathbb{G}$): 
    Computes a lower bound for the minimum clustering by identifying the maximum independent set $\alpha(\mathbb{G})$ in the adjacency matrix $\mathbb{G}(S)$. 
    For a set of unclustered matrices, if there are $k$ matrices that cannot be aligned with one another, this implies that at least $k$ clusters are needed.
    Consequently, we obtain an important lower bound:
    \begin{equation}
        |\mathcal{C}_{\text{best}}| \geq \alpha(\mathbb{G}).
    \end{equation}

    This lower bound enables early pruning: if the sum of the lower bound $\alpha(\mathbb{G})$ and the number of clusters in the current partition $\mathcal{C}$ exceeds the current smallest clustering $\mathcal{C}^*$, then continuing the search along this branch cannot yield a better solution than $\mathcal{C}^*$. Consequently, this branch can be pruned.
    To balance quality and speed, we adopt the linear time algorithm by Chang et al. \cite{b21}, which efficiently finds a large independent set via a reducing-peeling strategy.

    \item \text{Potential}($\mathbb{G}, \mathcal{C}, C_i$): 
    Evaluates the branching potential for a candidate cluster $C_i$ using the formula:
    \begin{equation}
        \begin{split}                   p_i=\frac{\text{div}(C_i)}{\sum_{j=1}\text{div}(C_j)} +   
        \frac{1/\alpha(\mathbb{G}\setminus C_i)}{\sum_{j=1}1/\alpha(\mathbb{G}\setminus C_j)}.
    \end{split}
    \end{equation}
     This formula jointly considers the quality of the current branch and the quality of subsequent paths after branching.
    \item \text{ChooseBranch}($p, t$): 
    Converts potentials $p_i$ into branching probabilities via a temperature-dependent softmax formula:
    \begin{equation}
        P(C_i) = \frac{e^{p_i / t}}{\sum_j e^{p_j / t}}.
    \end{equation}
    Higher $t$ encourages uniform exploration, while lower $t$ greedily selects the highest-potential branch.
\end{itemize}

The key idea of Algorithm \ref{HBnR} is to maintain a partial solution path \(\mathcal{C}\) and iteratively backtrack and advance along it to search for better solutions, as illustrated in Fig. \ref{fig6}. At each branching point (root), the algorithm evaluates the potential of each candidate branch (i.e., maximal cluster \(C_i\)) and uses the current branch temperature \(t\) to stochastically select the next direction. This probabilistic selection balances computational cost and solution accuracy: higher \(t\) encourages diverse path sampling, while lower \(t\) prioritizes high-potential branches. 

Backtracking is triggered under two conditions: all data points are assigned to clusters, or further exploration cannot yield a solution better than the current best \(\mathcal{C}^*\). During backtracking, the algorithm probabilistically reverts to a root node in \(\mathcal{C}\), with the backtracking distance regulated by the global temperature \(T\). Higher \(T\) allows revisiting distant roots, promoting exploration of alternative paths, while lower \(T\) restricts backtracking to recent nodes, focusing on local refinements. To prevent cycling, all traversed paths are explicitly marked.

The dual-temperature mechanism ensures both randomness and convergence. The lower bound enables us to prune unpromising branches, thereby avoiding unnecessary exploration. As temperatures decrease, the algorithm becomes greedier: branch selection favors higher-potential clusters, and backtracking distances shorten, focusing on dense regions with unstable cluster structures (e.g., high-degree roots). This adaptive approach allows the algorithm to escape local optima while progressively refining solutions toward a high-quality local best. In addition, since the algorithm does not repeatedly explore paths that have already been searched, Algorithm \ref{HBnR} is equivalent to Algorithm \ref{BnR} when $T\to+\infty$.
\begin{figure}[htbp]
\centerline{\includegraphics[width=0.45\textwidth,page=2]{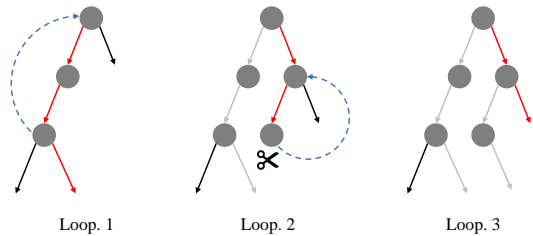}}
\caption{Steps of the HBnR Algorithm: the red, black, and gray paths represent the currently searched path $\mathcal{C}$, the incompletely searched path, and the fully searched path, respectively. The blue dashed line indicates the backtracking operation.}
\label{fig6}
\end{figure}
\section{SIMULATION}
\subsection{Synthetic Data Experiments}
To evaluate the convergence behavior of the proposed algorithms, we conducted extensive experiments on datasets of varying sizes. Due to the high computational complexity of Algorithm \ref{BnR} (BnR), we focus our analysis on Algorithm \ref{HBnR} (HBnB). The initial state of the clustering process is set to the worst-case scenario, where each matrix forms its own individual cluster. The test dataset consists of $n$ randomly generated three-dimensional complex matrices, ensuring a diverse and challenging evaluation environment.

In this simulation, we set \( T = 100 \), \( e = 10^{-5} \), $\beta=\gamma=0.9$.
The clustering results are illustrated in Figure \ref{fig7}, which demonstrates the evolution of the number of clusters as a function of iteration steps. It is evident that the number of clusters decreases significantly with increasing loop iterations, indicating the effectiveness of Algorithm 2 in progressively refining the clustering structure. 
\begin{figure}[htbp]
\centerline{\includegraphics[width=0.5\textwidth]{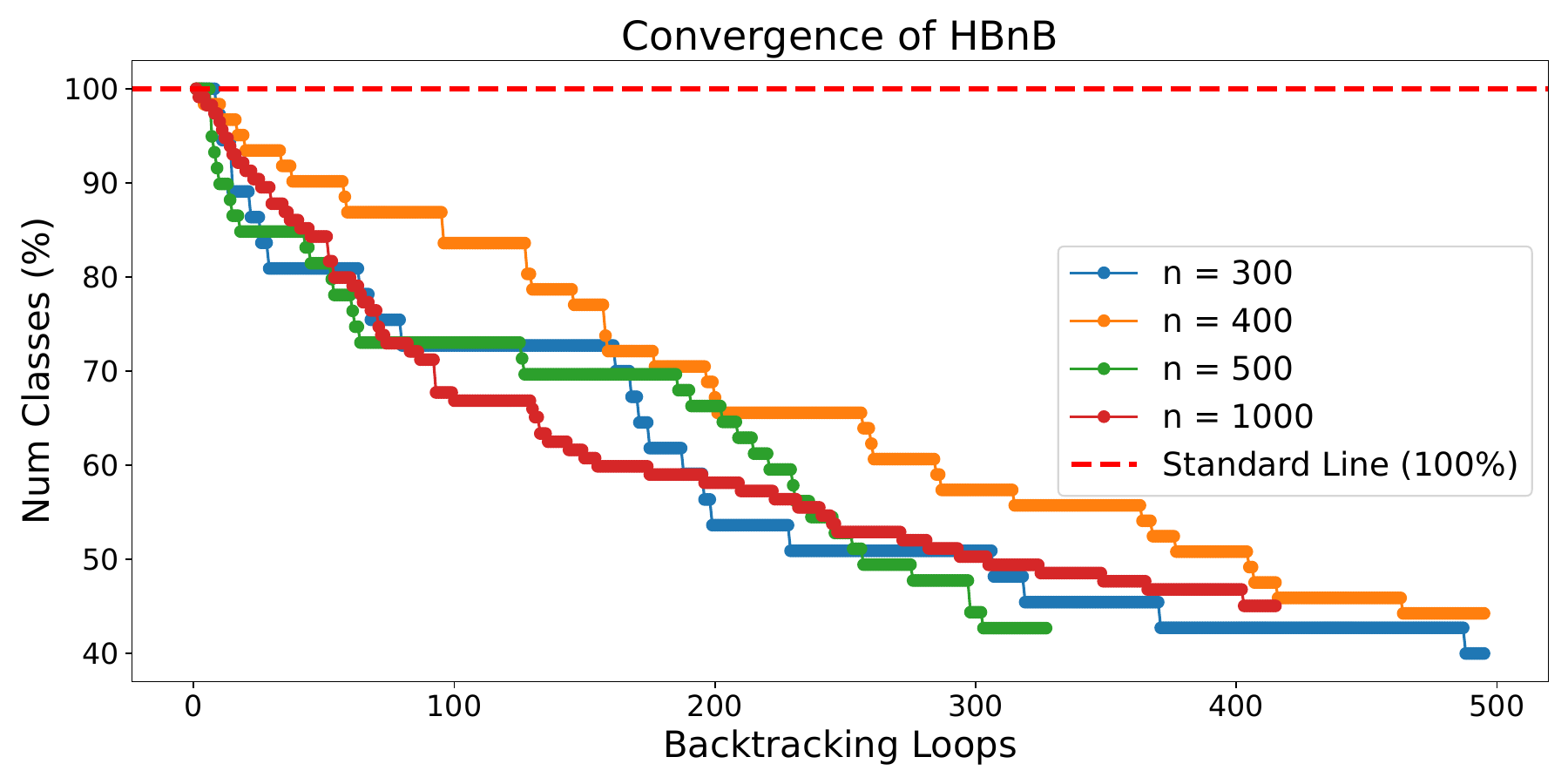}}
\caption{The convergence speed (percentage) of HBnB on data sets of different sizes.}
\label{fig7}
\end{figure}
\subsection{Real-World Scenario Analysis}
\begin{figure*}[htbp]
\centerline{\includegraphics[width=0.75\textwidth]{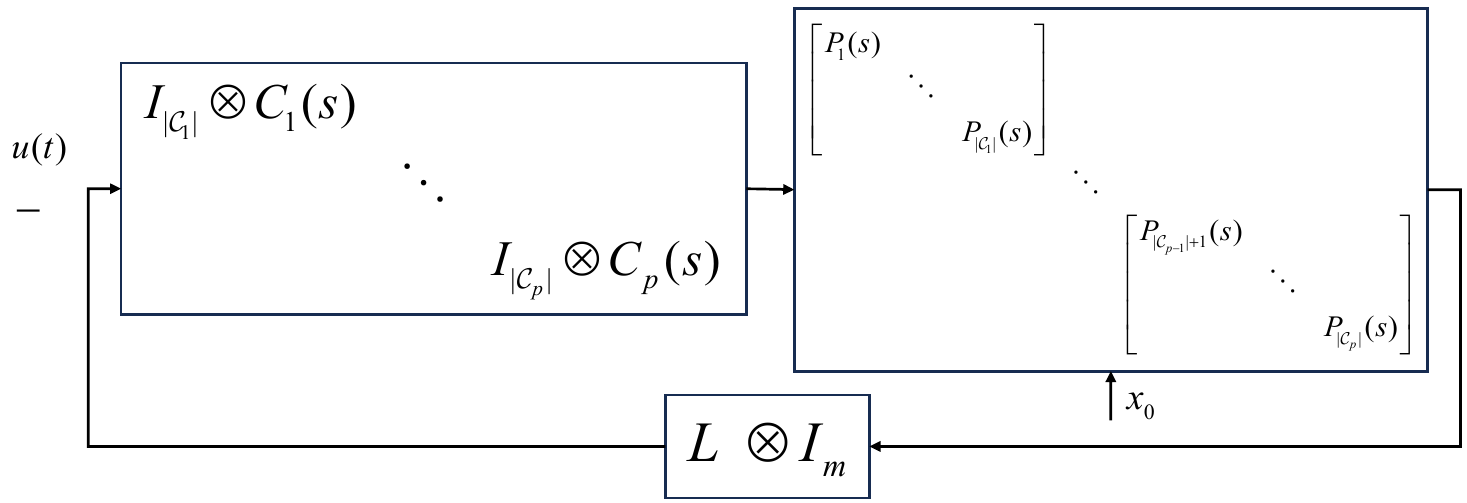}}
\caption{Synchronization with cluster-dependent controllers, transfer functions enclosed in the same block are considered members of one cluster, $\otimes$ denotes Kronecker product.}
\label{fig9}
\end{figure*}
\begin{figure*}[htbp]
    \centering
    \begin{subfigure}{0.49\textwidth}
        \centering
        \includegraphics[width=\linewidth]{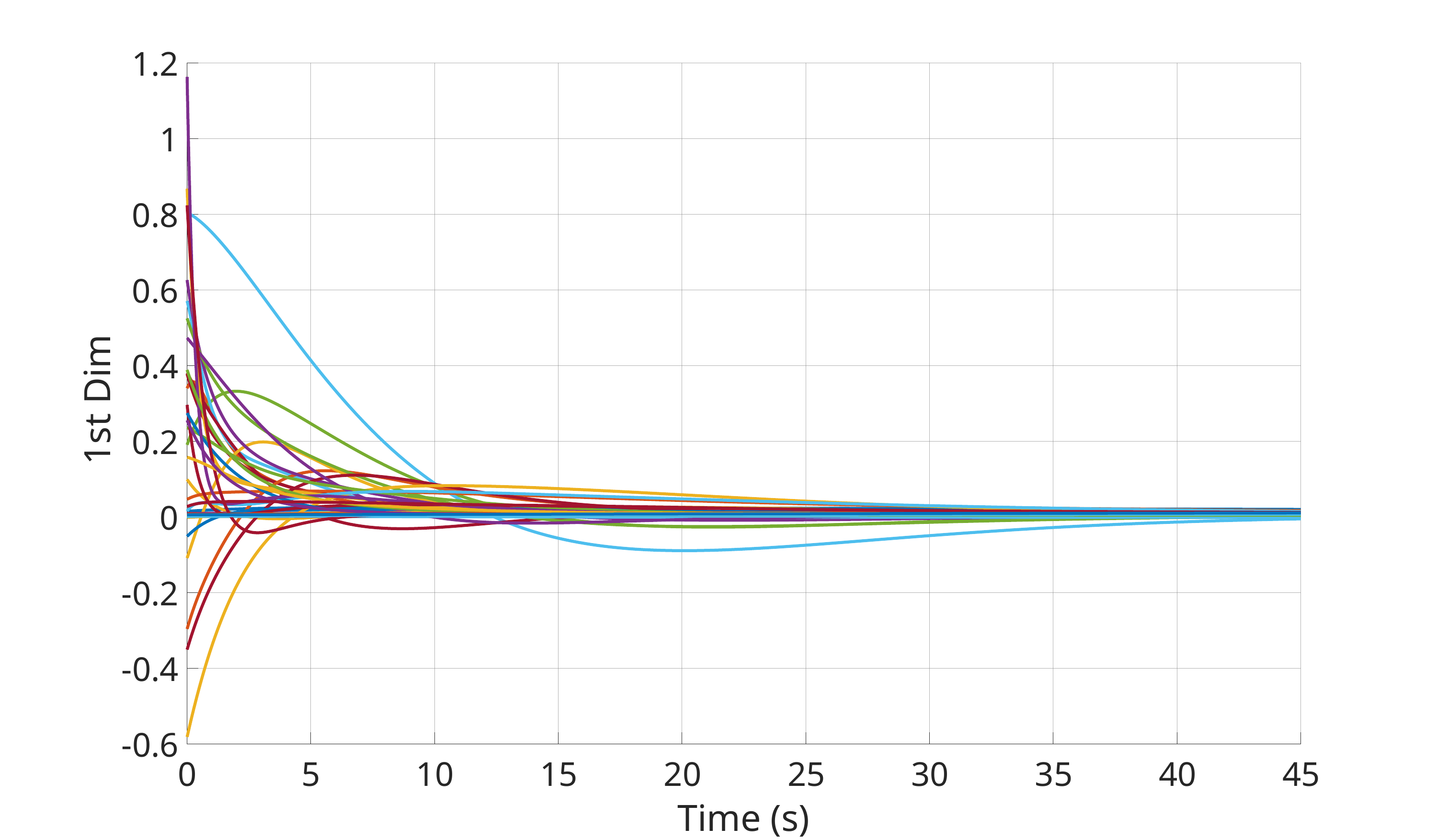}
        \caption{$y_1$}
        \label{fig10:a}
    \end{subfigure}
    \hfill
    \begin{subfigure}{0.49\textwidth}
        \centering
        \includegraphics[width=\linewidth]{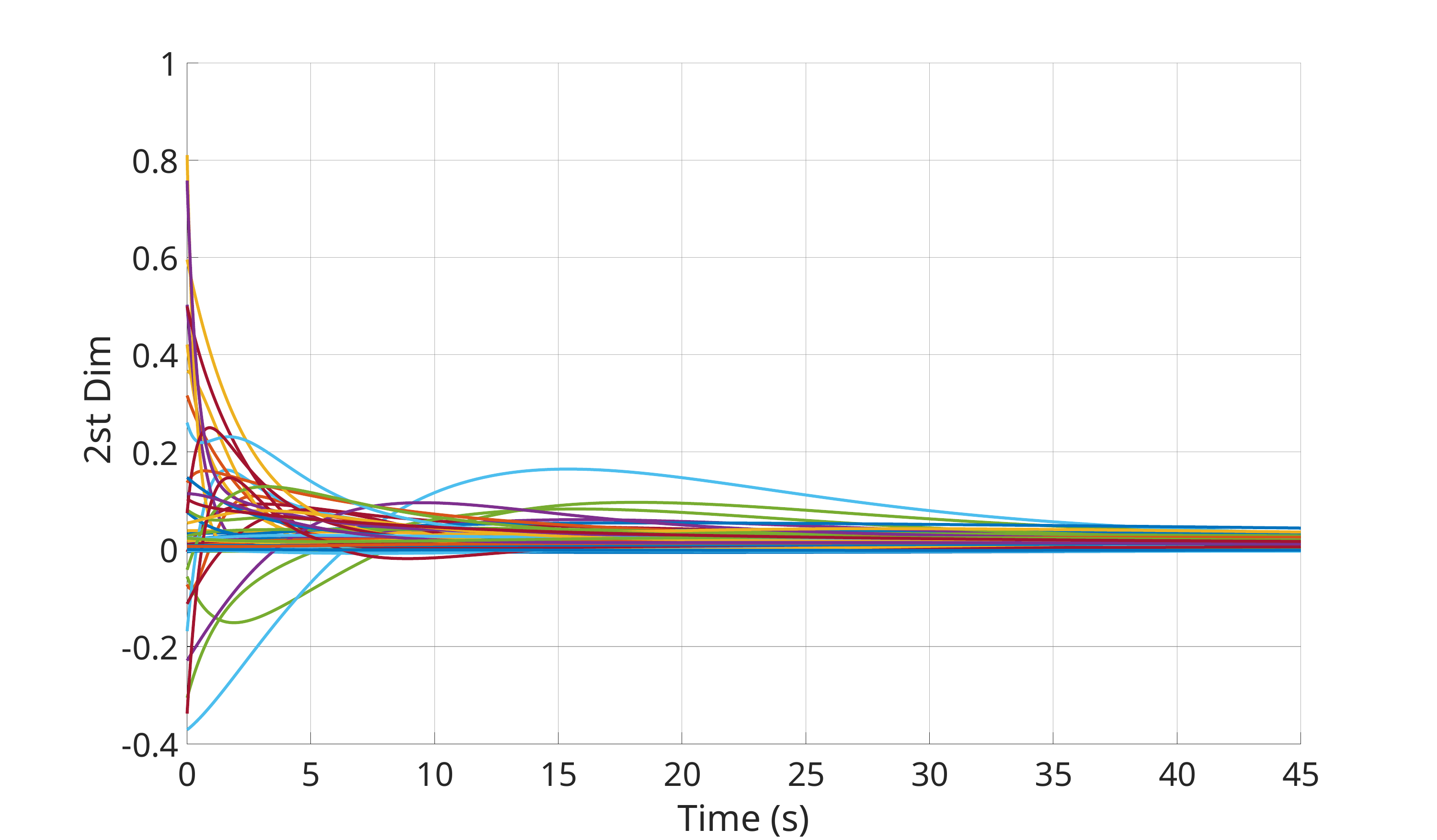}
        \caption{$y_2$}
        \label{fig10:b}
    \end{subfigure}
    \caption{Outputs of 50 agents.}
    \label{fig10}
\end{figure*}
To validate the practical efficacy of the proposed algorithm, we construct a 50-agent system where each agent operates as a dual-input-dual-output dynamic integrator:
\begin{equation}
    P(s)=\frac{M_i}{s},\quad i=1,\dots,50,
\end{equation}
where all $M_i\in\mathbb{C}^{2\times 2}$ are randomly generated.
The networked interactions are rigorously modeled through a randomly generated directed Laplacian matrix under a strongly connected topology. Let the essential phase of the Laplacian matrix $L$ of the network be denoted as $\phi_{ess}$.

According to Theorem of \cite{b10}, by partitioning these integrators into $p$ disjoint clusters $\{\mathcal{C}_i\}_{i=1}^p$, where the matrices within each cluster are simultaneously $\phi_{ess}$-alignable, we can solve for the corresponding alignment matrix $K_i$ for each cluster $\mathcal{C}_i$ using LMIs \ref{eq: LMIs}. Subsequently, $p$ synchronous controllers can be designed via the equation:
\begin{equation}
\label{design controller}
    C_i(s) = K_i.
\end{equation}

The connection between agents and controllers is illustrated in Fig.~\ref{fig9}. Here, \( \mathcal{C}_i \) denotes the \( i \)-th cluster after classification, and \( C_i(s) \) represents the controller corresponding to cluster \( \mathcal{C}_i \). In the block diagram on the right-hand side, each block along the diagonal represents the agents belonging to the same cluster, and \( L \) is the Laplacian matrix characterizing their interaction topology.  

In this simulation, we set \( T = 50 \), \( e = 10^{-5} \), $\beta=\gamma=0.9$, and initialized the solution to the worst-case scenario. The essential phase of the graph Laplacian matrix was computed as \( \phi_{\text{ess}}(\mathbb{G}(L)) = 0.9154 \) according to \eqref{essp}. After 30 minutes of execution, the algorithm partitioned 50 agents into 13 clusters, and corresponding controllers were designed using \eqref{design controller}. By connecting the agents and controllers, as shown in Fig. \ref{fig9}, synchronization of both outputs \( y_1 \) and \( y_2 \) was finally achieved, as demonstrated in Fig. \ref{fig10}. This confirms the effectiveness of the proposed control strategy.

\captionsetup[subfigure]{labelformat=empty}

\section{CONCLUSION}
This paper presents a novel phase-alignment-based framework for minimizing the number of controller types required to synchronize multi-agent systems. By leveraging the intrinsic phase properties of complex matrices, we introduce a clustering strategy that groups agents with aligned synchronization behaviors, thereby reducing redundant control inputs. We formalize the concept of matrix phase alignment and propose a minimum clustering problem constrained by phase similarity, enabling systematic controller reduction while preserving synchronization performance. 

Two algorithms are developed—Branch-and-Recurse (BnR) for exact global optimization in small-scale systems and Heuristic Branch-and-Bound (HBnB) for scalable approximations in large-scale scenarios. HBnB integrates simulated annealing to balance computational cost and solution accuracy, achieving near-optimal solutions with polynomial-time complexity. This work bridges the gap between theoretical phase analysis and practical control synthesis, offering a cost-effective solution for large-scale networks.


\end{document}